\newtheorem{theorem}{Theorem}
\newtheorem{lemma}{Lemma}
\newtheorem{definition}{Definition}
\newtheorem{remark}{Remark}
\newtheorem{example}{Example}
\newtheorem{algorithm}{Algorithm}
\newenvironment{proof}{\noindent Proof:}{$\Box$}
\newcommand{\N}{{\mathbb N}}
\newcommand{\R}{{\mathbb R}}
\newcommand{\C}{{\mathbb C}}
\newcommand{\Z}{{\mathbb Z}}
\newcommand{\ord}{\mbox{{\rm ord}}}
\newcommand{\Ann}{\mbox{{\rm Ann}}}
\newcommand{\one}{{\bf 1}}
\newcommand{\Ssc}{{\mathcal S}}
\newcommand{\Lsc}{{\mathcal L}}
\newcommand{\Nsc}{{\mathcal N}}
\newcommand{\Msc}{{\mathcal M}}
\newcommand{\Psc}{{\mathcal P}}
\title{An algorithm to compute the differential equations for  
the logarithm of a polynomial}
\author{Toshinori Oaku\\ 
Department of Mathematics, 
Tokyo Woman's Christian University}
\date{January 12, 2012}
\begin{document}
\maketitle

\begin{abstract}
We present an algorithm to compute the annihilator of 
(i.e., the linear differential equations for) the 
multi-valued analytic function $f^\lambda(\log f)^m$ in the Weyl algebra  
$D_n$ for a given non-constant polynomial $f$, 
a non-negative integer $m$, and a complex number $\lambda$. 
This algorithm essentially consists in the differentiation with respect to 
$s$ of the annihilator of $f^s$ in the ring $D_n[s]$ and 
ideal quotient computation in $D_n$. 
The obtained differential equations constitute what is called a 
holonomic system in $D$-module theory. 
Hence combined with the integration algorithm for $D$-modules, 
this enables us to compute a holonomic system for 
the integral of a function involving the logarithm of a polynomial 
with respect to some variables. 
\end{abstract}

\section{Introduction}

For a given function $u$, it is an interesting problem both in 
theory and in practice to determine the differential equations 
which $u$ satisfies. Let us restrict our attention to 
linear differential equations with polynomial coefficients. 
Then our problem can be formulated as follows:
Let $D_n$ be the Weyl algebra, i.e., the ring of differential 
operators with polynomial coefficients in the variables
$x = (x_1,\dots,x_n)$. 
An element $P$ of $D_n$ is expressed as a finite sum
\begin{equation}\label{eq:anOperator}
 P = \sum_{\alpha,\beta \in \N^n} a_{\alpha,\beta}x^\alpha\partial^\beta,
\end{equation}
with $x^\alpha = x_1^{\alpha_1}\cdots x_n^{\alpha_n}$, 
$\partial^\beta = \partial_1^{\beta_1}\cdots\partial_n^{\beta_n}$ 
and $a_{\alpha,\beta}\in \C$, 
where
$\alpha = (\alpha_1,\dots,\alpha_n),\,\,\beta = (\beta_1,\dots,\beta_n)
\in \N^n$ are multi-indices with $\N = \{0,1,2,\dots\}$ 
and $\partial_i = \partial/\partial x_i$ ($i=1,\dots,n$) denote
derivations.  
The {\em annihilator} of $u$ (in $D_n$) is defined to be
\[
 \Ann_{D_n}u = \{P \in D_n \mid Pu=0\},
\]
which is a left ideal of $D_n$. 
Since $D_n$ is a non-commutative Noetherian ring, there exist a 
finite number of operators $P_1,\dots,P_N \in D_n$ which generate
$\Ann_{D_n}u$ as left ideal. Thus we can regard the system
\[
P_1 u = \cdots = P_Nu = 0
\]
of linear (partial or ordinary) differential equations as a maximal
one that $u$ satisfies. 

As to systems of linear differential equations, 
there is a notion of holonomicity, or being {\em holonomic},  
which plays a central role in $D$-module theory.
See Appendix for a precise definition. 
A holonomic system of linear differential equations admits only 
a finite number of linearly independent solutions although  
it is not a sufficient condition for holonomicity. 
A {\em holonomic function} is by definition a function which satisfies 
a holonomic system. 

The importance of the holonomicity lies in, in addition to the 
finiteness property above, the fact the it is preserved under 
basic operations on functions such as sum, product, restriction 
and integration. 
Hence starting from some basic holonomic functions we can construct
various holonomic functions by using such operations. 

As one of basic holonomic functions, 
let us consider $f^\lambda$ with a non-constant polynomial $f$ 
in $x = (x_1,\dots,x_n)$ and a complex number $\lambda$. 
Then the function $f^\lambda$ is holonomic and there is an algorithm 
to compute its annihilator strictly (\cite{OakuJPAA1997},\cite{BM},\cite{SST}). 

Our purpose is to give an algorithm to compute the annihilator of 
$f^\lambda (\log f)^m$ with a positive integer $m$ 
and to prove that it is a holonomic function.  
This is achieved by differentiation with respect to the parameter
$s$ of the annihilator of $f^s$ in $D_n[s]$. 
This method can be extended to functions of the form
$f_1^{\lambda_1}\cdots f_N^{\lambda_N}
 (\log f_1)^{m_1}\cdots(\log f_N)^{m_N}$ for polynomials 
$f_k$, complex numbers $\lambda_k$ and nonnegative integers $m_k$. 

Since the algorithm yields a holonomic system, we can apply 
the integration algorithm for $D$-modules 
(see \cite{OTdeRham}, \cite{SST})
to get a holonomic system for the integral of 
a function involving the logarithm of a polynomial.

\section{Annihilators with a parameter}

Let $f$ be a non-constant polynomial in $n$ variables 
$x = (x_1,\dots,x_n)$
with coefficients in the field $\C$ of the complex numbers. 
From an algorithmic viewpoint, we assume that 
the coefficients of $f$ belong to a computable field. 

First, we consider formal functions of the form $f^s(\log f)^k$ with an 
indeterminate $s$. More precisely, for a non-negative integer $m$, 
we introduce the module 
\[
\Lsc(f,m) := \bigoplus_{k=0}^m \C[x,f^{-1},s]f^s(\log f)^k,
\]
of which $f^s(\log f)^k$ are regarded as a free basis over $\C[x,f^{-1},s]$.
Then $\Lsc(f,m)$ has a natural structure of left $D_n[s]$-module, 
which is induced by the action
of the derivation $\partial_j = \partial/\partial x_j$ defined by, 
for $a \in \C[x,f^{-1},s]$,  
\[
\partial_j\{af^s(\log f)^k\} 
= \left(\frac{\partial a}{\partial x_j} + sa f^{-1}\frac{\partial f}{\partial x_j}\right)f^s(\log f)^k 
+ k a f^{-1}\frac{\partial f}{\partial x_j}f^s(\log f)^{k-1}
\quad (j=1,\dots,n)
\]
if $k \geq 1$ and 
\[
\partial_j(af^s) 
= \left(\frac{\partial a}{\partial x_j} + sa f^{-1}\frac{\partial f}{\partial x_j}\right)f^s
\quad (j=1,\dots,n).
\]
In view of this action, it is easy to see that $\Lsc(f,m)/\Lsc(f,m-1)$ 
is isomorphic to $\Lsc(f,0) = \C[x,f^{-1},s]f^s$ as a left $D_n[s]$-module. 

Now consider the left $D_n[s]$-submodule 
\[
 \Psc(f,m) := D_n[s]f^s + \cdots + D_n[s](f^s(\log f)^m)
\]
of $\Lsc(f,m)$.
Our purpose is to determine the annihilator module 
\[
\Ann_{D_n[s]}(f^s,\dots,f^s(\log f)^m)
:= \left\{P = (P_0,P_1,\dots,P_m) \in D_n[s]^{m+1}
\mid \sum_{k=0}^m P_k(f^s(\log f)^k) = 0 \right\}
\]
and the annihilator ideal
\[
\Ann_{D_n[s]}f^s(\log f)^m
:= \left\{P \in D_n[s]
\mid  P(f^s(\log f)^m) = 0 \right\}.
\]
Note that there are isomorphisms
\begin{align}
\Psc(f,m) &\simeq D_n[s]^{m+1}/\Ann_{D_n[s]}(f^s,\dots,f^s(\log f)^m), 
\nonumber\\
D_n[s](f^s(\log f)^m) &\simeq D_n[s]/\Ann_{D_n[s]}(f^s(\log f)^m).
\nonumber
\end{align}

Now let us regard $f^s\log f$ as a multi-valued analytic function 
in $(x,s)$ on $\{(x,s)\in \C^{n+1} \mid f(x) \neq 0 \}$.  

\begin{lemma}\label{lemma:independence}
Let $f \in \C[x]$ be a non-constant polynomial. 
Then for $a_i(x) \in \C[x]$, 
\[
 \sum_{i=0}^m a_i(x)(\log f)^i = 0
\]
holds as analytic function if and only if $a_i(x,s) = 0$ for 
all $i$. 
\end{lemma}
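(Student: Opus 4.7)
The plan is to argue by induction on $m$, using the monodromy of $\log f$ around the zero locus of $f$. For $m=0$ there is nothing to prove, since $a_0(x)$ must vanish on the open dense set $\{f \neq 0\}$ and therefore vanishes identically as a polynomial.

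For the inductive step, since $f$ is non-constant, the hypersurface $Z = \{f = 0\} \subset \C^n$ is non-empty and has smooth points. Pick one such point $p$ and take a small loop $\gamma$ contained in $\{f \neq 0\}$ that links $Z$ once transversally at $p$. Analytic continuation along $\gamma$ leaves the single-valued polynomial coefficients $a_i(x)$ unchanged, whereas $\log f$ is sent to $\log f + 2\pi\sqrt{-1}$. Applying this continuation to the hypothesis yields, in a neighborhood of the base point of $\gamma$,
\[
\sum_{i=0}^{m} a_i(x)\bigl(\log f + 2\pi\sqrt{-1}\bigr)^{i} = 0,
\]
and subtracting the original identity gives, after expanding by the binomial theorem,
\[
\sum_{i=0}^{m-1} b_i(x)(\log f)^{i} = 0
\qquad\text{with}\qquad
b_{m-1}(x) = 2\pi\sqrt{-1}\,m\, a_m(x).
\]
The inductive hypothesis applied to this identity of degree at most $m-1$ in $\log f$ forces $b_{m-1} = 0$, whence $a_m(x) = 0$. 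The original relation then reduces to an identity of degree at most $m-1$, and a second application of the inductive hypothesis gives $a_0 = \cdots = a_{m-1} = 0$.

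The only point that requires a word of justification is the monodromy step, but this is a purely local matter: at a smooth point $p$ of $Z$, one can complete $f$ to a system of local holomorphic coordinates, so that $\log f$ pulls back to $\log z$ for some coordinate $z$ cutting out $Z$ locally, and the monodromy of $\log z$ around a small loop in $\C^{*}$ is the classical $2\pi\sqrt{-1}$. Smooth points of $Z$ exist because the singular locus of the algebraic hypersurface $Z$ is a proper subvariety of $Z$. Everything else is bookkeeping with the binomial expansion.
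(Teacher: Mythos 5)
Your proof is correct and follows essentially the same route as the paper's: induction on $m$, monodromy of $\log f$ around a smooth point of $\{f=0\}$ shifting $\log f$ to $\log f + 2\pi\sqrt{-1}$, and subtraction to drop the degree in $\log f$, with the same leading coefficient $2\pi\sqrt{-1}\,m\,a_m$. The only cosmetic difference is that you keep the induction at the level of polynomial coefficients globally, while the paper first localizes to coordinates in which $f=x_1$; both arguments are sound.
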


\begin{proof}
We argue by induction on $m$. 
Let $x_0 \in \C^n$ be a non-singular point of the hypersurface 
$f(x) = 0$, i.e, assume
\[
f(x_0) = 0,\quad \frac{\partial f}{\partial x_i}(x_0) \neq 0 
\quad \mbox{for some $i$ with $1 \leq i \leq n$}. 
\]
In view of the uniqueness of analytic continuation, 
we have only to show that each $a_i(x)$ vanishes near $x_0$. 
Hence we may suppose that $a_i(x)$ are analytic near $x_0 = 0$ 
and $f(x) = x_1$. That is, 
\begin{equation}\label{eq:identity} 
 a_0(x) + a_1(x)\log x_1 + \cdots + a_m(x)(\log x_1)^m = 0
\end{equation}
holds on a neighborhood $U$ of $0$. 
Fix a point $x = (x_1,\dots,x_n)$ in $U$ 
such that $x_1 \neq 0$. 
By analytic continuation along a circle 
$(e^{\sqrt{-1}t}x_1,x_2,\dots,x_n)$ with $0 \leq t \leq 2\pi$, 
the identity (\ref{eq:identity}) is transformed to 
\[
 a_0(x) + a_1(x)(\log x_1 + 2\pi\sqrt{-1}) + \cdots 
+ a_m(x)(\log x_1 + 2\pi\sqrt{-1})^m = 0.
\]
By subtraction, we get an identity of the form
\[
b_0(x) + b_1(x)\log x_1 + \cdots b_{m-1}(x)(\log x_1)^{m-1} = 0
\]
with
\[
b_{m-1}(x) = 2m\pi\sqrt{-1}a_m(x).
\]
From the induction hypothesis it follows that
$b_0(x) = \cdots = b_{m-1}(x)=0$, which implies $a_m(x)=0$. 
We are done by induction on $m$.
\end{proof}

\section{Computation of the annihilator}

Now let us describe an algorithm for computing the annihilator of
$f^s(\log f)^m$.  

\begin{algorithm}\label{alg:annfslog}
\rm
Input: a non-constant polynomial $f$ in the variables $x = (x_1,\dots,x_n)$ 
with coefficients in a computable subfield of $\C$, a non-negative 
integer $m$. 

\begin{enumerate}
\item
Let $G = \{P_1(s),\dots,P_k(s)\}$ be 
a generating set of the left ideal 
$\Ann_{D_n[s]}f^s := \{P(s) \in D_n[s] \mid P(s)f^s = 0\}$ 
by using an algorithm of \cite{OakuJPAA1997} or \cite{BM} 
(see also \cite{LM1}). 
\item
Let $e_0 = (1,0,\dots,0),\dots,e_{m} = (0,\dots,0,1)$ be the 
canonical unit vectors of $\C^{m+1}$. 
For each $i = 1,\dots,k$ and $j = 0,1,\dots,m$, set
\[
P_{i}(s)^{(j)} := \sum_{\nu = 0}^j { j \choose \nu}
\frac{\partial^{j-\nu}P_i(s)}{\partial s^{j-\nu}}e_\nu.
\]
\end{enumerate}

\noindent
Output: $G' := \{P_{i}(s)^{(j)} \mid 1\leq i \leq k,\,0\leq j \leq m\}$ 
generates
$\Ann_{D_n[s]}(f^s,\dots,f^s(\log f)^m)$. 
\end{algorithm}

\begin{algorithm}
\rm
Input: a non-constant polynomial $f$ in the variables $x = (x_1,\dots,x_n)$ 
with coefficients in a computable subfield of $\C$, a non-negative 
integer $m$. 

\begin{enumerate}
\item
Let $G'$ be the output of Algorithm \ref{alg:annfslog}. 
\item
Compute a Gr\"obner base $G''$ of the module generated by $G'$ 
with respect to a term order $\prec$ for $(D_n[s])^{m+1}$ such that
$Me_j \prec M'e_k$ for any monomial $M$ and $M'$ if $k < j$. 
Let $G_0$ be the set of the last component of each element of $G''$. 
\end{enumerate}
Output: $G_0$ generates $\Ann_{D_n[s]} f^s(\log f)^m$.
\end{algorithm}

\begin{lemma}\label{lemma:Ps}
Let $I$ be a left ideal of $D_n[s]$ generated by 
$\{P_1(s),\dots,P_k(s)\}$. For $P(s) \in D_n[s]$, and $j \in \N$, set
\[
P(s)^{(j)} := \sum_{\nu = 0}^j 
{j \choose \nu}\frac{\partial^{j-\nu}P(s)}{\partial s^{j-\nu}}e_\nu.
\]
Then the left $D_n[s]$-submodule of $(D_n[s])^{m+1}$ which is generated by 
$\{P(s)^{(j)} \mid P(s) \in I,\,0 \leq j \leq m\}$ 
coincides with the one which is generated by
$\{P_i(s)^{(j)} \mid 0 \leq i \leq k,\,0 \leq j \leq m\}$ 
for any integer $m \geq 0$. 
\end{lemma}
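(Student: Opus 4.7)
The plan is to prove the ``$\supseteq$'' inclusion trivially and the ``$\subseteq$'' inclusion by a Leibniz-style computation. The trivial direction is immediate: since each $P_i(s) \in I$, every $P_i(s)^{(j)}$ already lies in the family $\{P(s)^{(j)} \mid P(s) \in I,\, 0\le j\le m\}$ that generates the left-hand submodule.

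For the non-trivial direction, I would first record that the assignment $P(s)\mapsto P(s)^{(j)}$ is $\C$-linear (and in fact additive over $D_n[s]$) directly from the definition. Writing a general element of $I$ as $P(s) = \sum_{i=1}^{k} Q_i(s) P_i(s)$, linearity reduces the problem to showing that for any $Q(s) \in D_n[s]$ and any generator $P_i(s)$, the element $(Q(s)P_i(s))^{(j)}$ lies in the submodule generated by $\{P_i(s)^{(j')} \mid 0\le j'\le j\}$. The key identity to establish is
\[
(Q(s)P_i(s))^{(j)} \;=\; \sum_{a=0}^{j} {j \choose a}\bigl(\partial_s^{a}Q(s)\bigr)\cdot P_i(s)^{(j-a)}.
\]
To derive it, I would expand $\partial_s^{j-\nu}(QP_i)$ inside the definition of $(QP_i)^{(j)}$ by Leibniz, extract the coefficient of each triple $(\partial_s^{a}Q)(\partial_s^{b}P_i)\,e_{j-a-b}$, and rewrite the resulting binomial factor ${j \choose a+b}{a+b \choose b}$ as ${j \choose a}{j-a \choose b}$; the inner sum over $b$ then reassembles $P_i(s)^{(j-a)}$ exactly. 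Since $0 \le j-a \le j \le m$ for every term on the right, this displays $(QP_i)^{(j)}$ as the desired $D_n[s]$-linear combination.

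The only real obstacle I anticipate is the bookkeeping of binomial indices in the double Leibniz expansion; there is no analytic or ring-theoretic subtlety, because $\partial_s$ differentiates only the $s$-coefficients of elements of $D_n[s]$ and commutes componentwise with the module action on $(D_n[s])^{m+1}$. A sanity check at $j=1$, where the asserted identity reduces to $(QP_i)^{(1)} = Q\cdot P_i^{(1)} + (\partial_s Q)\cdot P_i\,e_0$, and matches the direct expansion $\partial_s(QP_i)\,e_0 + QP_i\,e_1$, should confirm that the combinatorial coefficients have been arranged correctly before grinding out the general case.
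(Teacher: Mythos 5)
Your proof is correct and follows essentially the same route as the paper's: both reduce the problem, via additivity, to a Leibniz-type identity expressing $(Q(s)P_i(s))^{(j)}$ as a left $D_n[s]$-linear combination of the $P_i(s)^{(j-a)}$ with $0 \leq a \leq j$. The only cosmetic difference is that you prove the identity for an arbitrary coefficient $Q(s)$ in one stroke, whereas the paper first expands $Q_i(s)=\sum_l Q_{il}s^l$ and verifies the same formula for the monomials $s^l$ (i.e., your identity with $\partial_s^a(s^l)=(l)_a s^{l-a}$); your binomial rearrangement ${j \choose a+b}{a+b \choose b}={j \choose a}{j-a \choose b}$ and the $j=1$ sanity check are both sound.
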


\begin{proof}
Let $\Nsc$ be the left $D_n[s]$-module generated by 
$\{P_i(s)^{(j)} \mid 0 \leq i \leq k,\,0 \leq j \leq m\}$
and $P(s)$ be a nonzero element of $I$.
Then there exist
\[
Q_i(s) = \sum_{l=0}^{m_i}Q_{il}s^l \quad (Q_{il} \in D_n)
\]
such that $P(s) = \sum_{i=1}^k Q_i(s)P_i(s)$. 
Then we have
\[
P(s)^{(j)} = \sum_{i=1}^k\sum_{l=0}^{m_i}Q_{il}(s^lP_i(s))^{(j)}.
\]
Hence we have only to show that $(s^lP_i(s))^{(j)}$ belongs to $\Nsc$. 
This can be done as follows:
\begin{align}
(s^lP_i(s))^{(j)}
&= \sum_{\nu=0}^j
{j \choose \nu}\left(\frac{\partial}{\partial s}\right)^{j-\nu}
(s^lP_i(s))e_\nu
\nonumber \\
&= \sum_{\nu=0}^j
{j \choose \nu}\sum_{\mu=0}^{\min\{j-\nu,l\}}
{j-\nu \choose \mu}(l)_\mu s^{l-\mu}
\left(\frac{\partial}{\partial s}\right)^{j-\nu-\mu}P_i(s)e_\nu
\nonumber\\
&= \sum_{\mu=0}^{\min\{j,l\}}{j \choose \mu} (l)_\mu s^{l-\mu}
\sum_{\nu=0}^{j-\mu}{j-\mu \choose \nu}
\left(\frac{\partial}{\partial s}\right)^{j-\mu-\nu}P_i(s)e_\nu
\nonumber\\
&= \sum_{\mu=0}^{\min\{j,l\}}{j \choose \mu} 
(l)_\mu s^{l-\mu}P_i(s)^{(j-\mu)},
\nonumber
\end{align}
where $(l)_\mu := l(l-1)\cdots(l-\mu+1)$. 
\end{proof}

\begin{theorem}
The output of Algorithm \ref{alg:annfslog} 
coincides with $\Ann_{D_n[s]}(f^s,\dots,f^s(\log f)^m)$. 
\end{theorem}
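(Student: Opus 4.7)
The plan is to prove the two inclusions separately. Let $\Nsc_m$ denote the left $D_n[s]$-submodule of $D_n[s]^{m+1}$ generated by $G'$, and write $\Ann_m := \Ann_{D_n[s]}(f^s,\dots,f^s(\log f)^m)$; the goal is $\Nsc_m = \Ann_m$.

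For the inclusion $\Nsc_m \subseteq \Ann_m$, I would differentiate the identity $P_i(s)f^s=0$ repeatedly in $s$. Since $\partial f^s/\partial s = f^s\log f$, the general Leibniz rule gives
\[
\frac{\partial^j}{\partial s^j}\bigl(P(s)f^s\bigr)=\sum_{\nu=0}^{j}\binom{j}{\nu}\frac{\partial^{j-\nu}P(s)}{\partial s^{j-\nu}}\,f^s(\log f)^\nu .
\]
For $P=P_i$ and $0\le j\le m$ the right-hand side is exactly the value of $P_i(s)^{(j)}$ on the tuple $(f^s,\dots,f^s(\log f)^m)$, while the left-hand side vanishes since $P_i(s)f^s=0$. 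Hence each generator in $G'$ lies in $\Ann_m$.

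The reverse inclusion $\Ann_m \subseteq \Nsc_m$ I would attack by induction on $m$. The base $m=0$ is the very defining property of $\{P_1(s),\dots,P_k(s)\}$. For the step, let $(Q_0,\dots,Q_m)\in\Ann_m$, so $\sum_{\nu=0}^{m}Q_\nu\cdot f^s(\log f)^\nu = 0$ in $\Lsc(f,m)$. Writing $Q_m=\sum_\beta a_\beta(x,s)\partial^\beta$ and applying Leibniz to $\partial^\beta(f^s\cdot(\log f)^m)$, every summand has $\log f$-degree at most $m$, with equality only in the term where all derivatives fall on $f^s$; consequently the coefficient of $(\log f)^m$ in $Q_m\bigl(f^s(\log f)^m\bigr)$ is precisely $Q_m(f^s)$. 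Since the terms with $\nu<m$ contribute nothing at $\log f$-degree $m$, comparing coefficients forces $Q_m(f^s)=0$, i.e.\ $Q_m=\sum_i c_i(s)P_i(s)$ for some $c_i(s)\in D_n[s]$. Subtracting $\sum_i c_i(s)P_i(s)^{(m)}$ from $(Q_0,\dots,Q_m)$ kills the last coordinate, because by construction the $m$-th component of $P_i(s)^{(m)}$ equals $P_i(s)$. The resulting tuple has the form $(Q'_0,\dots,Q'_{m-1},0)$, and its first $m$ entries annihilate $(f^s,\dots,f^s(\log f)^{m-1})$; the inductive hypothesis applies, using the same $P_i(s)^{(j)}$ with $j\le m-1$ (all of which have zero $m$-th coordinate, so the embedding $D_n[s]^m \hookrightarrow D_n[s]^{m+1}$ is compatible with the two output sets). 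This closes the induction.

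The main obstacle is the single calculation identifying the $(\log f)^m$-coefficient of $Q_m\bigl(f^s(\log f)^m\bigr)$ as $Q_m(f^s)$; the rest of the argument is then a clean descent in $m$. Coefficient comparison at the formal level is justified by the very definition of $\Lsc(f,m)$ as a free $\C[x,f^{-1},s]$-module on the basis $\{(\log f)^k\}_{0\le k\le m}$, whose analytic counterpart is Lemma \ref{lemma:independence}. A minor bookkeeping issue is keeping straight that the generators produced for parameter $m-1$ coincide with those for parameter $m$ having $j\le m-1$, which holds because the formula for $P_i(s)^{(j)}$ does not depend on $m$ once $j\le m$.
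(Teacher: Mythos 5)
Your proof is correct, and its overall skeleton matches the paper's: first check that each generator $P_i(s)^{(j)}$ annihilates the tuple by Leibniz differentiation of $P_i(s)f^s=0$ in $s$, then descend on the top nonzero coordinate using the fact that the $(\log f)^m$-coefficient of $\sum_\nu Q_\nu\bigl(f^s(\log f)^\nu\bigr)$ is $Q_m(f^s)$ (which is just the $D_n[s]$-linearity of the quotient map $\Lsc(f,m)\to\Lsc(f,m)/\Lsc(f,m-1)\simeq\Lsc(f,0)$, legitimate because $\Lsc(f,m)$ is by definition free over $\C[x,f^{-1},s]$). The one genuine difference is the element you subtract at each step of the descent. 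The paper subtracts $Q_m(s)^{(m)}$ with $Q_m(s)\in\Ann_{D_n[s]}f^s$; since the operation $P\mapsto P^{(j)}$ involves $\partial/\partial s$ and is therefore not left $D_n[s]$-linear, showing that $Q_m(s)^{(m)}$ lies in the module generated by the $P_i(s)^{(j)}$ requires a separate computation, and that is precisely the content of Lemma \ref{lemma:Ps} (the identity $(s^lP_i(s))^{(j)}=\sum_{\mu}\binom{j}{\mu}(l)_\mu s^{l-\mu}P_i(s)^{(j-\mu)}$). You instead write $Q_m=\sum_i c_i(s)P_i(s)$ and subtract $\sum_i c_i(s)P_i(s)^{(m)}$, which is manifestly a left $D_n[s]$-combination of the generators, lies in the annihilator by the first half of the argument, and has last coordinate exactly $Q_m$ because the $e_m$-component of $P_i(s)^{(m)}$ is $\binom{m}{m}P_i(s)=P_i(s)$. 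This bypasses Lemma \ref{lemma:Ps} entirely, so your route is a little leaner; what the paper's version buys is a reusable statement of independent interest, namely that the $(\cdot)^{(j)}$-image of the \emph{whole} ideal $\Ann_{D_n[s]}f^s$ is already generated by the images of its generators. Your remaining bookkeeping points --- that the generating set for parameter $m-1$ is the subset of $G'$ with $j\le m-1$ under the coordinate embedding $D_n[s]^m\hookrightarrow D_n[s]^{m+1}$ --- are sound and are the same facts the paper's induction uses implicitly.
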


\begin{proof}
Let $P(s)$ belong to $\Ann_{D_n[s]}f^s$. 
Differentiating the equation $P(s)f^s = 0$ with respect to $s$, we get
\[ 
\sum_{\nu=0}^j 
{ j \choose \nu}\frac{\partial^{j-\nu}P_i(s)}{\partial s^{j-\nu}}
(f^s(\log f)^\nu) = 0
\]
for $0 \leq j \leq m$. 
This shows that each $P_{i}(s)^{(j)}$ annihilates 
$(f^s,\dots,f^s(\log f)^m)$. 

Set $\Msc := \Ann_{D_n[s]}(f^s,\dots,f^s(\log f)^m)$. 
Let $\Nsc$ be the left $D_n[s]$-module generated by 
the output $G'$ of Algorithm \ref{alg:annfslog}. 
The argument above shows that
$\Nsc$ is a left $D_n$-submodule of $\Msc$. 
Hence we have only to prove $\Nsc = \Msc$. 
For this purpose 
let $\Nsc_j$ be the left $D_n[s]$-module generated by 
$\{P_i(s)^{(\nu)} \mid 1 \leq i \leq k,\, 0 \leq \nu \leq j\}$
and set
\begin{align}
\Msc_j &:=
\{(Q_0,Q_1,\dots,Q_m) \in \Msc \mid
Q_\nu = 0 \mbox{ if } \nu > j\}.
\nonumber
\end{align}
Let $Q(s) = (Q_0(s),\dots,Q_j(s),0,\dots,0)$ be an element of $\Msc_j$. 
Then
\[
 \sum_{\nu=0}^j Q_\nu(s)(f^s(\log f)^\nu) = 0 
\]
holds.
In view of the action of $D_n[s]$ on $\Lsc(f,j)$ 
noted in Section 1, this implies $Q_j(s)f^s = 0$. 
Hence $Q_j(s)^{(j)}$ belongs to $\Nsc_j$ by Lemma \ref{lemma:Ps}. 
It is easy to see that $Q(s) - Q_j(s)^{(j)}$ belongs to 
$\Msc_{j-1}$. This means $\Msc_j = \Nsc_j + \Msc_{j-1}$ 
for $1 \leq j \leq m$. 
Then we can show that
$\Nsc_j = \Msc_j$ holds for $1 \leq j \leq m$ 
by induction on $m$ noting $\Nsc_0 = \Msc_0$. 
\end{proof}

\begin{remark}\rm
If $f$ is weighted homogeneous, i.e., 
if there exist rational numbers $w_i$ such that 
$\sum_{i=1}^n w_i x_i\partial_i(f) = f$,
then $D_n[s]f^s(\log f)^m$ is isomorphic to $\Lsc(f,m)$ 
as left $D_n[s]$-module.  That is, we have an isomorphism
\[
 D_n[s]/\Ann_{D_n[s]} f^s(\log f)^m \simeq 
(D_n[s])^{m+1}/\Ann_{D_n[s]}(f^s,\dots,f^s(\log f)^m). 
\]
of left $D_n[s]$-module
In fact, this follows from the relations 
\[
  \left(\sum_{i=1}^n w_i x_i\partial_i - s\right)(f^s(\log f)^k) 
= k f^s(\log f)^{k-1} \qquad (k \geq 1).
\]
\end{remark}

\section{Specialization of the parameter}

Let us fix a complex number $\lambda$.
(From an algorithmic view point, we assume $\lambda$ lies in a 
computable subfield of the field $\C$.) 
We set
\[
\Lsc(f,m,\lambda) := \bigoplus_{k=0}^m \C[x,f^{-1}]f^\lambda(\log f)^k,
\]
where $f^\lambda(\log f)^k$ are regarded as a free basis over $\C[x,f^{-1}]$.
Substituting $\lambda$ for $s$ gives $\Lsc(f,m,\lambda)$ a natural 
structure of left $D_n$-module. In fact, one has
\[
\partial_j\{af^\lambda(\log f)^k\} 
 \left(\frac{\partial a}{\partial x_j} + \lambda a f^{-1}\frac{\partial f}{\partial x_j}\right)f^\lambda(\log f)^k 
+ k a f^{-1}\frac{\partial f}{\partial x_j}f^\lambda(\log f)^{k-1}
\quad (j=1,\dots,n)
\]
for $k \geq 1$ and 
\[
\partial_j(af^\lambda) 
= \left(\frac{\partial a}{\partial x_j} + \lambda a f^{-1}\frac{\partial f}{\partial x_j}\right)f^\lambda
\quad (j=1,\dots,n)
\]
with $a \in \C[x,f^{-1}]$. 
This implies that $\Lsc(f,m,\lambda)/\Lsc(f,m-1,\lambda)$ is isomorphic to 
$\Lsc(f,0,\lambda) = D_nf^\lambda$ as a left $D_n$-module. 
It follows that $\Lsc(f,m,\lambda)$ is holonomic since so is $D_nf^\lambda$ 
as was proved by Bernstein \cite{Bernstein}. 

Set
\[
 \Psc(f,m,\lambda) := D_nf^\lambda + \cdots + D_n(f^\lambda(\log f)^m)
\]
We define the annihilators of 
$(f^s,\dots,f^\lambda(\log f)^m)$ 
and of $f^\lambda(\log f)^m$ to be
\begin{align}
\Ann_{D_n}(f^\lambda,\dots,f^\lambda(\log f)^k)
&:= \{P = (P_0,P_1,\dots,P_m) \in (D_n)^{m+1}
\mid \sum_{k=0}^m P_k(f^\lambda(\log f)^k) = 0 \},
\nonumber \\
\Ann_{D_n}f^\lambda(\log f)^m
&:= \{P \in D_n \mid  P(f^\lambda(\log f)^m) = 0 \}
\nonumber
\end{align}
respectively. 
Then we have isomorphisms
\begin{align}
\Psc(f,m,\lambda) &\simeq (D_n)^{m+1}
 /\Ann_{D_n}(f^\lambda,\dots,f^\lambda(\log f)^k)
\nonumber\\
D_n(f^\lambda(\log f)^m) &\simeq D_n/\Ann_{D_n}(f^\lambda(\log f)^m).
\nonumber
\end{align}

In the sequel, we need information on the integral roots of 
the \emph{Bernstein-Sato polynomial} or the \emph{$b$-function}\/
of $f$, which is, by definition, the monic polynomial 
$b_{f}(s)$ of the least degree such that a formal functional equation 
\begin{equation}\label{eq:fseq}
P(s)f^{s+1} = b_{f}(s)f^s
\end{equation}
holds with some $P(s) \in D_n[s]$. 
The existence of such a functional equations was proved by 
Bernstein \cite{Bernstein}. 
It was proved by Kashiwara \cite{Kashiwara} that 
the roots of $b_f(s)=0$ are negative rational numbers. 
An algorithm to compute $b_f(s)$ and an associated operator 
$P(s)$ was given in \cite{OakuDuke}. 
The following proposition generalizes a result of 
Kashiwara \cite[Proposition 6.2]{Kashiwara}:

\begin{theorem} \label{prop:specialization}
Let $b_f(s)$ be the Bernstein-Sato polynomial of $f$, i.e., a polynomial in $s$ of 
the least degree such that $P(s)f^{s+1} = b_f(s)f^s$ holds with a $P(s) \in D_n[s]$. 
Let $\lambda$ be a complex number such that $b_f(\lambda-\nu)
\neq 0$ for any positive integer $\nu$.  
Then we have
\begin{align}
\Ann_{D_n}(f^\lambda,\dots,f^\lambda(\log f)^k)
&= \{P(\lambda) \mid P(s) \in 
\Ann_{D_n[s]}(f^s,\dots,f^s(\log f)^k),
\nonumber \\
\Ann_{D_n}f^\lambda(\log f)^m
&= \{ P(\lambda) \mid P(s) \in \Ann_{D_n[s]}f^s(\log f)^m\}. 
\nonumber
\end{align}
\end{theorem}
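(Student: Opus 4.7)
The plan is to reduce the first identity to a single module-theoretic specialization claim, establish that claim by induction on $m$ starting from Kashiwara's Proposition 6.2, and then derive the second identity by the same strategy applied to a smaller module.

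The evaluation map $\Lsc(f,m) \to \Lsc(f,m,\lambda)$ sending $g(x,s) f^s (\log f)^k$ to $g(x,\lambda) f^\lambda (\log f)^k$ is surjective with kernel $(s-\lambda)\Lsc(f,m)$. The key claim is
\begin{equation}\label{eq:key}
\Psc(f,m) \cap (s-\lambda)\Lsc(f,m) = (s-\lambda) \Psc(f,m).
\end{equation}
Granting this, the nontrivial inclusion ``$\subseteq$'' of the first identity is routine: given $Q \in \Ann_{D_n}(f^\lambda,\dots,f^\lambda(\log f)^m)$, pick any lift $\tilde Q(s) \in (D_n[s])^{m+1}$; then $u := \sum_k \tilde Q_k(s) f^s(\log f)^k \in \Psc(f,m)$ specializes at $s=\lambda$ to $0$, so $u \in \Psc(f,m) \cap (s-\lambda)\Lsc(f,m) = (s-\lambda)\Psc(f,m)$ and equals $(s-\lambda) \sum_k R_k(s) f^s(\log f)^k$ for some $R(s) \in (D_n[s])^{m+1}$; hence $\tilde Q(s) - (s-\lambda) R(s) \in \Ann_{D_n[s]}(f^s,\dots,f^s(\log f)^m)$ specializes to $Q$.

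I would prove \eqref{eq:key} by induction on $m$. The $m=0$ case, with $\Psc(f,0) = D_n[s] f^s$, is Kashiwara's Proposition 6.2: iterating $P(s) f^{s+1} = b_f(s) f^s$ realizes $f^{s-N}$ as an element of $D_n[s] f^s$ divided by $\prod_{\nu=1}^{N} b_f(s-\nu)$, and the hypothesis $b_f(\lambda-\nu)\neq 0$ for every positive integer $\nu$ ensures this denominator does not absorb the factor $(s-\lambda)$. For the inductive step, decompose $u \in \Psc(f,m) \cap (s-\lambda)\Lsc(f,m)$ as $u = u' + P(s) f^s(\log f)^m$ with $u' \in \Psc(f,m-1)$; its $(\log f)^m$-coefficient is $P(s) f^s$, so by the base case $P(s) = (s-\lambda) Q(s) + P_0(s)$ with $P_0(s) \in \Ann_{D_n[s]} f^s$. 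Subtracting $(s-\lambda) Q(s) f^s(\log f)^m$ from $u$ yields a remainder with vanishing top log coefficient, which by $\Psc(f,m) \cap \Lsc(f,m-1) = \Psc(f,m-1)$ and $\Lsc(f,m-1) \cap (s-\lambda)\Lsc(f,m) = (s-\lambda)\Lsc(f,m-1)$ (both immediate from injectivity of $(s-\lambda)$ on each top log coefficient) lies in $\Psc(f,m-1) \cap (s-\lambda)\Lsc(f,m-1)$, hence in $(s-\lambda)\Psc(f,m-1)$ by induction.

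The second identity is obtained by the same scheme applied to $M = D_n[s] f^s (\log f)^m$: prove the analogue of \eqref{eq:key} for this $M$ by a parallel induction, using the identity
\[
P_0(s) f^s (\log f)^m = -\sum_{\nu=0}^{m-1} \binom{m}{\nu} P_0^{(m-\nu)}(s) f^s (\log f)^\nu
\]
(valid for $P_0(s) \in \Ann_{D_n[s]} f^s$, obtained by differentiating $P_0(s) f^s = 0$ in $s$ as in Algorithm \ref{alg:annfslog}) to control the lower log power contributions that arise when stripping off the top log coefficient. The main obstacle of the entire argument is the base case: one must track the factor $(s-\lambda)$ through repeated applications of the Bernstein--Sato functional equation and verify that the accumulated denominator $\prod b_f(s-\nu)$ cannot consume it, which is precisely the content of the hypothesis on $\lambda$. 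Once the base case is secured, the inductive step reduces to a diagram chase along the short exact sequence $0 \to \Psc(f,m-1) \to \Psc(f,m) \to \Psc(f,m)/\Psc(f,m-1) \to 0$ together with the identification $\Psc(f,m)/\Psc(f,m-1) \cong D_n[s] f^s$ recorded in Section 2.
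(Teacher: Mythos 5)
Your argument for the first equality is correct, and at bottom it is the paper's argument in a different packaging: both come down to writing an element of $\Psc(f,m)\cap(s-\lambda)\Lsc(f,m)$ with polynomial numerators over a common denominator $f^l$, clearing $f^{-l}$ by iterating the Bernstein--Sato functional equation, and invoking $b_f(\lambda-\nu)\neq 0$ so that the accumulated factor $b(s)=\prod b_f(s-\nu)$ cannot absorb $(s-\lambda)$. The paper treats all log-components in one pass (constructing the operators $\widetilde Q_k(s)$ inductively and ending with $b(\lambda)\sum_k P_ke_k=\widetilde P(\lambda)$), whereas you filter by log-degree and induct on $m$ from Kashiwara's base case; the two organizations are interchangeable, and yours is arguably cleaner. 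One small repair: the identity $\Psc(f,m)\cap\Lsc(f,m-1)=\Psc(f,m-1)$ is \emph{not} immediate from injectivity of $(s-\lambda)$ on top log coefficients; it requires the differentiation identity $P_0(s)\{f^s(\log f)^m\}=-\sum_{\nu<m}\binom{m}{\nu}P_0^{(m-\nu)}(s)\{f^s(\log f)^\nu\}$ for $P_0(s)\in\Ann_{D_n[s]}f^s$. You do state that identity, but only later and for a different purpose; it belongs here.

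The genuine gap is in the second equality. Your ``parallel induction'' for $M:=D_n[s]\,f^s(\log f)^m$ does not close. Take $u=P(s)\{f^s(\log f)^m\}\in M\cap(s-\lambda)\Lsc(f,m)$; the base case gives $P(s)=(s-\lambda)Q(s)+P_0(s)$ with $P_0(s)f^s=0$, and the remainder $u-(s-\lambda)Q(s)\{f^s(\log f)^m\}=P_0(s)\{f^s(\log f)^m\}$ is, via the differentiation identity, an element of $\Psc(f,m-1)$ lying in $(s-\lambda)\Lsc(f,m-1)$. Your first key claim then places it in $(s-\lambda)\Psc(f,m-1)$ --- but $\Psc(f,m-1)$ is not contained in $M$ (the lower log powers $f^s(\log f)^\nu$, $\nu<m$, need not lie in $D_n[s]f^s(\log f)^m$ unless $f$ is, say, weighted homogeneous, cf.\ the Remark in Section 3), and it is not contained in $D_n[s]f^s(\log f)^{m-1}$ either, so neither reading of the induction hypothesis applies and you cannot conclude $u\in(s-\lambda)M$. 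What is actually needed is a separate statement such as: $\Psc(f,m)/D_n[s]f^s(\log f)^m$ has no $(s-\lambda)$-torsion, equivalently $P(s)e_m\in N+(s-\lambda)(D_n[s])^{m+1}$ forces $P(s)\in J+(s-\lambda)D_n[s]$, where $N=\Ann_{D_n[s]}(f^s,\dots,f^s(\log f)^m)$ and $J=\Ann_{D_n[s]}f^s(\log f)^m$. In fairness, the paper itself dispatches the second equality with the single sentence ``we have only to show the first equality'' and never supplies this step; but your proposed mechanism, as written, demonstrably does not produce it, so the second half of the statement remains unproved in your proposal.
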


\begin{proof}
We have only to show the first equality.
Assume that $\sum_{k=0}^m P_k (f^\lambda(\log f)^k) = 0$ holds with $P_k \in D_n$. 
Then there exist non-negative integer $l \geq 0$ 
and polynomials $a_k(x,s) \in \C[x,s]$ such that 
\[
  \sum_{k=0}^m P_k (f^s(\log f)^k) 
  = (s-\lambda)\sum_{k=0}^m a_k(x,s)f^{s-l}(\log f)^k. 
\]
By using the functional equation (\ref{eq:fseq}), 
we can find an operator $Q(s) \in D_n[s]$ such that
\[
  b_f(s-1)\cdots b_f(s-l)f^{s-l} = Q(s)f^s.
\]
In view of the action of $D_n[s]$ on $\Lsc(f,m)$, 
there exist $a'_k(x,s) \in \C[x,s]$ and a non-negative integers $l_1$ 
such that
\begin{align}
  b_f(s-1)\cdots b_f(s-l)f^{s-l}(\log f)^m &= Q(s)\{f^s(\log f)^m\}
+ \sum_{k=0}^{m-1} a'_k(x,s)f^{s-l_1}(\log f)^k.
\nonumber
\end{align}
Proceeding inductively, we conclude that there exist a polynomial $b(s) 
\in \C[s]$ which is a product (possibly with multiplicities) of $b_f(s-j)$ 
with $j\geq 1$ and operators $\widetilde Q_k(s) \in D_n[s]$ such that
\[
b(s) \sum_{k=0}^m a_k(x,s)f^{s-l}(\log f)^k 
= \sum_{k=0}^m \widetilde Q_k(s)\{f^s(\log f)^k\}. 
\] 
Hence
\[
\widetilde P(s) := 
b(s)\sum_{k=0}^m P_ke_k - (s-\lambda)\sum_{k=0}^m \widetilde Q_k(s)e_k
\]
belongs to $\Ann_{D_n[s]}(f^s,\dots,f^s(\log f)^k)$ and 
$b(\lambda)\sum_{k=0}^m P_ke_k = \widetilde{P}(\lambda)$.  
This completes the proof since $b(\lambda) \neq 0$ by the assumption.
\end{proof}

If $b_f(\lambda-\nu) = 0$ for some positive integer $\nu$, 
then set $\nu_0 := \max\{\nu \in \Z \mid b_f(\lambda-\nu) = 0\}$ and 
$\lambda_0 := \lambda-\nu_0$.  
Then $\lambda_0$ satisfies the condition of Theorem \ref{prop:specialization}. 
Then for $(P_0,\dots,P_m) \in (D_n)^{m+1}$, we have
\begin{align}
&
(P_0,\dots,P_m) \in \Ann_{D_n}(f^\lambda,\dots,f^\lambda(\log f)^k)
\nonumber\\
&\Leftrightarrow\quad
 (P_0f^{\nu_0},\dots,P_mf^{\nu_0}) \in 
\Ann_{D_n}(f^{\lambda_0},\dots,f^{\lambda_0}(\log f)^k)
\nonumber\\
&\Leftrightarrow\quad
 (P_0,\dots,P_m) \in 
\Ann_{D_n}(f^{\lambda_0},\dots,f^{\lambda_0}(\log f)^k) : f^{\nu_0}.
\nonumber
\end{align}
The module quotient in the last line can be obtained by
computing the module intersection or else by syzygy computation. 
Now let us describe two algorithms for module quotient in general.  
First, let us define the componentwise product of two elements 
$P = (P_0,\dots,P_m)$ and $Q = (Q_0,\dots,Q_m)$ of $(D_n)^{m+1}$ 
to be $PQ := (P_0Q_0,\dots,P_mQ_m)$. 
Let $N$ be a left $D_n$-submodule of $(D_n)^{m+1}$ and $P$ be a 
nonzero element of $(D_n)^{m+1}$. 
Then the module quotient $N:P$ is defined to be
\[
N:P := \{Q \in (D_n)^{m+1} \mid QP \in N\}, 
\]
which is a left $D_n$-submodule of $(D_n)^{m+1}$. 

\begin{algorithm}\label{alg:quotient}\rm
Input: A set $G_1$ of generators of a left $D_n$-submodule $N$ of 
$(D_n)^{m+1}$  and a non-zero element $P = (P_0,P_1,\dots,P_m)$ 
of $(D_n)^{m+1}$.  
\begin{enumerate}
\item
Introducing a new variable $t$, compute a Gr\"obner base 
$G_2$ 
of the left $D_n[t]$-module of $(D_n[t])^{m+1}$ which is generated by
$\{(1-t)P_ke_k \mid 0 \leq k \leq m \} 
\cup \{tQ \mid Q \in G_1 \}$ with respect to a term 
order $\prec$ such that $x^\alpha\partial^\beta e_j  
\prec te_k$ for any $j,k \in \{0,1,\dots,m\}$ and $\alpha,\beta \in \N^n$. 
\item
$G_3 := G_2 \cap (D_n)^{m+1}$.
\item
$G_4 := \{ Q/P \mid Q \in G_3\}$, where $Q/P$ denotes the element in 
$(D_n)^{m+1}$ such that $(Q/P)P = Q$ in the sense of compnentwise 
product.  
\end{enumerate}
Output: $G_4$ generates the module quotient $N:P$. 
\end{algorithm}

In fact, we can show in the same way as in the commutative case that
$G_3$ generates the left module $N \cap (D_n)^{m+1}P$.  
In particular, for each 
$Q \in G_3$, there exists $Q' \in (D_n)^{m+1}$ such that 
$Q = Q'P$. Let us denote this $Q'$ by $Q/P$.  
Then $Q'$ belongs to the quotient module $N : P$. 
Conversely, if $Q'$ belongs to $N : P$, then $Q'P$ belongs to 
$N \cap (D_n)^{m+1}P$.  Hence $Q'$ belongs to the module generated 
by $G_4$.
The correctness of the following algorithm should be clear: 

\begin{algorithm}\rm
Input: A set $G_1 = \{Q_1,\dots,Q_k\}$ 
of generators of a left $D_n$-submodule $N$ of 
$(D_n)^{m+1}$  and a non-zero element $P=(P_0,\dots,P_m)$ of $(D_n)^{m+1}$.  
\begin{enumerate}
\item
Compute a set $G_2$ of generators of the syzygy module
\[
 \Ssc := \{(S_0,S_1,\dots,S_m,S_{m+1},\dots,S_{m+k}) \in (D_n)^{m+k+1} 
\mid \sum_{j=0}^m S_jP_je_j + \sum_{j=m+1}^{m+k}S_{m+j}Q_j = 0\}
\]
via a Groebner base of the module generated by $P_je_j$ ($0 \leq j \leq m$)
and $Q_j$ ($1 \leq j \leq k$).  
\item
Let $G_3$ be the set of the first $m+1$ components of the elements of $G_2$.
\end{enumerate}
Output: $G_3$ generates the module quotient $N:P$. 
\end{algorithm}

Summed up, the annihilators for $(f^{\lambda}((\log f)^k)_{0 \leq k \leq m}$
and $(f^{\lambda}(\log f)^m$ are computed as follows:

\begin{algorithm}\label{alg:specialization}\rm
Input: a non-constant polynomial $f$ in the variables $x = (x_1,\dots,x_n)$ 
with coefficients in a computable subfield of $\C$, 
a number $\lambda$ which belongs to a computable subfield of $\C$, 
a non-negative integer $m$. 
\begin{enumerate}
\item
Compute a set $G_1$ of generators of 
$\Ann_{D_n[s]}(f^s,\dots,f^s((\log f)^k)$ 
by Algorithm \ref{alg:annfslog}.
\item
Compute the (global) Bernstein-Sato polynomial $b_f(s)$ of $f$ 
by using one of the algorithms in \cite{OakuDuke}, \cite{OakuJPAA1997},  
\cite{BM} or their modifications. 
\item
Let $\nu_0$ be the largest positive integer $\nu$ such that 
$b_f(\lambda-\nu) = 0$ if there are any such $\nu$. 
If there are no positive integer $\nu$ such that $b_f(\lambda-\nu)=0$, 
then set $\nu_0=0$. 
\item
Set $\lambda_0 := \lambda - \nu_0$  and 
$G_2 := G_1|_{s=\lambda_0}$ 
(substitute $\lambda_0$ for $s$ in each element of $G_1$). 
\item
If $\nu_0 > 0$, then let $G_3$ be a set of generators of the 
module quotient $\langle G_2\rangle : f^{\nu_0} 
= \langle G_2\rangle : (f^{\nu_0},\dots,f^{\nu_0})$, 
where $\langle G_2\rangle$ denotes the left module generated by $G_2$. 
\item
If $\nu_0 = 0$, then set $G_3 := G_2$.
\item
Compute a Gr\"obner base $G_4$ of the module generated by $G_3$ 
with respect to a term order $\prec$ for $(D_n)^{m+1}$ such that
$Me_j \prec M'e_k$ for any monomial $M$ and $M'$ if $k < j$. 
Let $G_5$ be the set of the last component of each element of $G_4$. 
\end{enumerate}
Output: $G_3$ generates $\Ann_{D_n}(f^\lambda,\dots,f^{\lambda}(\log f)^m)$; 
$G_5$ generates $\Ann_{D_n}f^{\lambda}(\log f)^m$.
\end{algorithm}

\begin{remark}\rm
In step (3) of the algorithm above, we need only integer roots 
of the $b$-function. Hence one can employ 
a method described in \cite{LM2} to determine all the 
integer roots of the $b$-function efficiently
without computing the whole $b$-function. 
\end{remark}

\section{Implementation and examples}

We have implemented the algorithms  
in a computer algebra system Risa/Asir \cite{asir}, 
which is capable of Groebner base computation of modules over 
the ring of differential operators as well as over the ring of 
polynomials. 

\begin{example}\rm
(one dimensional case) 
Let $f$ be a square-free polynomial in one variable $x$ with 
complex coefficients. 
Since  $\Ann_{D_1[s]}f^s$ is generated by $f\partial -sf'$, 
the annihilator module 
$\Ann_{D_1[s]}(f^s,\dots,f^s(\log f)^m)$ is generated by 
$m+1$ elements
\[
(f\partial_x-sf',0,\cdots,0),\quad
(-f',f\partial_x-sf',0,\cdots,0),\quad\cdots,\quad
(0,\cdots,0,-mf',f\partial_x-sf')
\]
with $\partial_x = d/dx$ and $f' = \partial(f)$.

Since the Bernstein-Sato polynomial of $f$ is $s+1$, the substitution 
$s = \lambda$ gives generators of 
$\Ann_{D_1}(f^\lambda,\dots,f^\lambda(\log f)^m)$ if $\lambda 
\neq 0,1,2,\dots$.
In particular, $\Ann_{D_1}(f^{-1},\dots,f^{-1}(\log f)^m)$ is generated
by 
\[
(\partial_x f,0,\cdots,0),\quad
(-f',\partial_x f,0,\cdots,0),\quad\cdots,\quad
(0,\cdots,0,-mf',\partial_x f). 
\]
In view of Algorithm \ref{alg:specialization}, 
we can verify that 
$\Ann_{D_1}(1,\dots,(\log f)^m) = 
\Ann_{D_1}(f^{-1},\dots,f^{-1}(\log f)^m) : f$
is generated by 
\[
(\partial_x,0,\cdots,0),\quad
(-f',f\partial_x,0,\cdots,0),\quad\cdots,\quad
(0,\cdots,0,-mf',f\partial_x).
\]
Explicit generators of 
$\Ann_{D_1}(\log f)^m$ for $m \geq 1$ 
would be complicated: 
For example, if $f = x^3-x$ and $m=1$, 
Algorithm \ref{alg:specialization} gives generators
\[ 
\begin{array}{l}
(3x^5-4x^3+x)\partial_x^2+(3x^4+1)\partial_x,
\\
(x^3-x)\partial_x^3+(-3x^4+9x^2-2)\partial_x^2+(-3x^3+3x)\partial_x
\end{array}
\]
of $\Ann_{D_1}\log f$, which is not generated by a single element. 
\end{example}

\begin{example}\rm
Set $f = x^2y^2+z^2$ with $n=2$ and $(x_1,x_2,x_3) = (x,y,z)$, 
$\partial_x = \partial/\partial x$ and so on.  
First $\Ann_{D_3[s]}f^s$ is generated by 
\[
\begin{array}{l}
-x \partial_x+y \partial_y, 
\quad
y \partial_y+z \partial_z-2 s,
\\
z \partial_x-y^2 x \partial_z,
\quad
z \partial_y-y x^2 \partial_z,
\\
-z \partial_x^2+y^3 \partial_z \partial_y+y^2 \partial_z. 
\end{array}
\]
Since the Bernstein-Sato polynomial of $f$ is 
$b_f(s) = (s+1)^3(2s+3)$, 
the substitution $s=-1$ gives a set of generators of 
$\Ann_{D_3}f^{-1}\log f$.
Then by ideal quotient computation we get a set of generators
\[
\begin{array}{l}
-x \partial_x+y \partial_y,
\quad
-z \partial_x+y^2 x \partial_z,
\\
\partial_y^2 + x^2 \partial_z^2,
\quad
\partial_x^2 + y^2 \partial_z^2,
\\
-z \partial_y+y x^2 \partial_z,
\quad
\partial_y \partial_x^2-z y \partial_z^3,
\\
-\partial_y^2 \partial_x+z x \partial_z^3,
\quad
y\partial_y\partial_x + z\partial_z\partial_x
\\
y \partial_z \partial_y+z \partial_z^2+\partial_z,
\\
y \partial_y^2 +z \partial_z \partial_y + \partial_y,
\\
z \partial_y \partial_x+z y x \partial_z^2-y x \partial_z,
\\
\partial_y^2 \partial_x^2 + z^2 \partial_z^4 + 2 z \partial_z^3
\end{array}
\]
of $\Ann_{D_3}\log f$.  Let us consider the integral
\[
u(t) := \int_{\R^3} e^{-t(x^2+y^2+z^2)}\log(x^2y^2+z^2)\,dxdydz,
\]
which is well-defined for $t > 0$. 
Then $u(t)$ satisfies ordinary differential equations 
\[
 P_1 u(t) = P_2u(t)=0
\]
with
\begin{align}
P_1 &=t^3 \partial_t^5+(2 t^4+17 t^2) \partial_t^4+(32 t^3+80 t) \partial_t^3
\nonumber\\&
+(-4 t^4+144 t^2+100) \partial_t^2+(-28 t^3+192 t) \partial_t-36 t^2+48,
\nonumber\\
P_2&=
t^3 \partial_t^4+(3 t^4+14 t^2) \partial_t^3+(2 t^5+35 t^3+52 t) \partial_t^2
\nonumber\\&
+(14 t^4+102 t^2+48) \partial_t+18 t^3+66 t.
\nonumber
\end{align}
\end{example}

\section{(Appendix) Holonomic systems}

Let us present a precise definition of holonomicity. 
We define the total or the $(\one,\one)$-order of nonzero $P \in D_n$ to be 
\[
\ord_{(\one,\one)}(P) := \max\{
|\alpha|+|\beta| = 
\alpha_1 + \cdots +\alpha_n + \beta_1 + \cdots + \beta_n
\mid a_{\alpha,\beta} \neq 0\}.
\]
We set $\ord_w(0) := -\infty$.  
This induces the filtration
\[
 F_k(D_n) := \{ P \in D_n \mid \ord_{(\one,\one)}(P) \leq k\} 
\quad (k \in \Z)
\]
on the ring $D_n$. 
Let $M$ be a left $D_n$-module and $\{F_k(M)\}_{k \in \Z}$ be 
a good $(\one,\one)$-filtration. This means the following properties:
\begin{enumerate}
\item
every $F_k(M)$ is a finite dimensional vector space over $\C$;
\item
$F_k(M) \subset F_{k+1}(M) \quad \mbox{for all $k \in \Z$}$;
\item
$\displaystyle \bigcup_{k\in\Z}F_k(M) = M$;
\item
$F_i(D_n)F_k(M) \subset F_{i+k}(M) \quad 
\mbox{for all $i,k\in \Z$}$;
\item
there exists $k_1 \in \Z$ such that 
$F_k(M) = 0$ for $k \leq k_1$;
\item
there exists $k_2 \in \Z$ such that 
$F_i(D_n)F_k(M) = F_{i+k}(M)$ for $k \geq k_2$. 
\end{enumerate}
Then there exists a polynomial in $k$ such that 
$\dim_{\C}F_k(M) = p(k)$ for sufficiently large $k$. 
The degree of $p(k)$ does not depend on the choice of a 
good $(\one,\one)$-filtration of $M$ and is called the 
dimension of the module $M$, which we denote by $d(M)$. 
It was proved by Bernstein \cite{Bernstein} that $d(M) \geq n$ 
if $M \neq 0$. 
The following definition is due to Bernstein \cite{Bernstein}: 
\begin{definition}\rm
A finitely generated left $D_n$-module $M$ is called a {\em holonomic 
system} if $d(M) \leq n$. 
We also call a left ideal $I$ of $D_n$ to be 
a {\em holonomic ideal}, by abuse of terminology, if the left $D_n$-module 
$D_n/I$ is holonomic.  
\end{definition}
Note that $d(M) \leq n$ is equivalent to $d(M) = n$ or $M=0$ in view 
of the Bernstein inequality stated above. 
The dimension $d(M)$ can be computed as the degree of the Hilbert function
from a Gr\"obner base with respect to a term order which is compatible 
with the total degree.  

Holonomicity is preserved by operations such as sum, product, 
restriction to affine subvarieties, and integration with respect to 
some of the variables (cf.\ \cite{Bernstein}, \cite{Bjork}) 
and they are computable (see e.g., \cite{OakuIntegral}). 
Let $R_n := \C(x)\langle\partial_1,\dots,\partial_n\rangle$ be
the ring of differential operators with rational function coefficients. 
A $D_n$-module $M$ is said to be of finite rank and the dimension is
called the rank of $M$, if $R_nM$ is a finite dimensional vector space 
over $\C(x)$. 
A holonomic $D_n$-module $M$ is of finite rank but 
the converse is not true in general. 
Note that there is an algorithm for a given $D_n$-module $M$ of 
finite rank to construct a holonomic $D_n$-module 
$\widetilde M$ and a surjective $D_n$-homomorphism
of $M$ to $\widetilde M$ (\cite{OTW},\cite{Tsai}). 
If $M$ is a system of differential 
equations of finite rank for an analytic function $u$, 
then we have an isomorphism  
$\widetilde \ D_n/\Ann_{D_n}u$. 

\begin{example}\rm
Set $f = x^2y^2+z^2$ and consider the function $f^{-1}$. 
It is easy to see that the operators
\begin{align}
 f\partial_x + \frac{\partial f}{\partial x} 
&= (x^2y^2+z^2)\partial_x + 2xy^2,
\nonumber \\
 f\partial_y + \frac{\partial f}{\partial y} 
&= (x^2y^2+z^2)\partial_y + 2x^2y,
\nonumber \\
 f\partial_z + \frac{\partial f}{\partial z} 
&= (x^2y^2+z^2)\partial_z + 2z 
\nonumber 
\end{align}
annihilate $f^{-1}$. Let $J$ be the left ideal generated 
by these three operators, a `naive' annihilator. 
Then the Hilbert function of the $D_3/J$ is 
\[
\frac{1}{30} k^5+ \frac14 k^4 + \frac76 x^3 + \frac54 x^2 ,
+ \frac{43}{10}k
\] 
which means that the degree of $D_3/J$ is 5 and hence
$D_3/J$ is not holonomic although it is of rank one. 
The true annihilator $I$ of $f^{-1}$ is generated by 
\[
\begin{array}{l}
3 z^2 \partial_x^2-2 y^3 \partial_z \partial_y-2 y^2 \partial_z,
\\
3 z^2 \partial_y-2 y x^2 \partial_z,
\\
3 z^2 \partial_x-2 y^2 x \partial_z,
\\
3 y \partial_y+2 z \partial_z+6,
\\
-x \partial_x+y \partial_y
\end{array}
\]
and the Hilbert function of $D_3/I$ is 
\[
\frac73k^3-\frac32k^2+\frac{43}{6}k-1,
\]
which implies that $D_3/I$ is holonomic. 
The Hilbert function of $D_3/\Ann_{D_3}\log f$ is
\[  
2k^3+\frac32k^2+\frac52k-1. 
\]
\end{example}

\end{document}